\definecolor{LinkColor}{rgb}{0,0,100} 
\newtheorem{theorem}{Theorem}[section]
\newtheorem{lemma}[theorem]{Lemma}
\newtheorem{proposition}[theorem]{Proposition}
\theoremstyle{definition}
\newtheorem*{thx}{Acknowledgements}
\newcommand{\SL}{\operatorname{SL}}
\newcommand{\F}{\mathbb{F}}
\newcommand{\Z}{\mathbb{Z}}
\newcommand{\ZG}{\textup{Z}}
\newcommand{\Heis}{\textup{Heis}}
\newcommand{\Cyc}{\textup{C}}
\newcommand{\Hom}{\textup{Hom}}
\newcommand{\Maps}{\textup{Maps}}
\newcommand{\Sn}{\textup{S}}
\newcommand{\cl}{\textup{cl}}
\newcommand{\graffe}[1]{\left\lbrace #1\right\rbrace }
\newcommand{\gen}[1]{\langle #1\rangle}
\newcommand{\cor}[1]{\mathcal{#1}}
\newenvironment{example}[1][]{\refstepcounter{theorem}\par\medskip\noindent \textbf{Example~\thedefinition. #1} \rmfamily}{\medskip}
\title{Cryptographic multilinear maps using pro-$p$ groups}
\author{Delaram Kahrobaei and Mima Stanojkovski}
\date{\today}
\begin{document}

\begin{abstract}
In \cite{KTT/20}, the authors show how, to any nilpotent group of class $n$, one can associate a non-interactive key exchange protocol between $n+1$ users. The \emph{multilinear} commutator maps associated to nilpotent groups play a key role in this protocol. In the present paper, we explore some alternative platforms, such as  pro-$p$ groups.
\end{abstract}

\maketitle 







\section{Introduction}
In recent years and since the seminal paper by Boneh and Silverberg \cite{BonehSilverberg/03}, multilinear maps have become very popular in cryptography. Their use enables computations on hidden data and generalizes the ideas behind Diffie-Hellman's key exchange protocol.
Additionally, their applications are numerous: multiparty Diffie-Hellman key exchange, functional encryption, indistinguishability obfuscation. An excellent survey on the subject is \cite{survey/16}.

The concept of cryptographic multilinear maps generalizes that of bilinear pairings associated to elliptic curves: such multilinear maps are efficiently computable and may provide secure cryptosystems. In particular, first examples of multilinear maps in cryptography come from geometry, are defined on cyclic groups, and are bilinear. Constructing efficiently computable and cryptographically interesting maps is a hard task and to the best of our knowledge it is still unclear whether this can be done for every rank. 

There are three main constructions of cryptographic systems based on bilinear maps: the original one from Garg, Gentry and Halevi \cite{GGH/13}, a variant due to Coron, Lepoint and Tibouchi \cite{CLT/13}, and a “graph-induced” construction by Gentry, Gorbunov and Halevi \cite{GGH/15}.
Moreover, in a series of papers, Huang has presented geometric constructions of trilinear maps with a view towards cryptographic applications \cite{Huang/18, Huang/18II, Huang/19, Huang/20}. The main difficulty, as already many authors pointed out, is not to construct multilinear maps, but rather to check whether they can be efficiently employed for the production of secure cryptosystems. 

In \cite{KTT/20},  the authors construct a \emph{Non-Interactive Key Exchange} protocol (NIKE) employing multilinear maps within the realm of group theory. In particular, they exploit commutator maps  in nilpotent groups to construct key exchange protocols for a number of users equal to one more than the \emph{nilpotency class} of the platform group.
Note that nilpotent groups (in particular $p$-groups or more generally Engel groups) have been proposed for other cryptographic applications; we refer to \cite{KahNoc/20} for a survey and open problems.

In the present paper, we discuss the connection between the security of the cryptosystem from \cite{KTT/20} and the \emph{discrete logarithm problem} (DLP) in finite $p$-groups. We recall that, if $C=\gen{x}$ is a finite cyclic group and $y$ is an element of $C$, then solving the discrete logarithm problem for $y$ means to find an integer $a$ (equivalently the class of $a\bmod |C|$) such that $y=x^a$. In particular, the larger the order of the $p$-group $C$ is, the more work is likely to be required to solve the discrete logarithm problem in $C$. 

Motivated by the wish of being able to share a secret key between any number of users,
we propose, moreover, the employment of infinite pro-$p$ groups in the key exchange system in the following way. If $G$ is an infinite nonnilpotent pro-$p$ group and $n\geq 2$ is an integer, then $G$ has a finite quotient $\overline{G}$ of nilpotency class $n$ and so, over $\overline{G}$, we have a non-interactive key exchange protocol between $n+1$ users. In this sense, the group $G$ is a platform for an arbitrary number of users. We also show that groups like $G$ exist, with the additional property that they are ``comparably secure'' for each number of users, in the sense that the discrete logarithms that have to be computed in an attack are associated to isomorphic groups. We will consider some explicit examples.

 \subsection{Organization of the paper}
 The article is organized in the following way. In Section \ref{sec:notation}, we give basic definitions and facts from finite and profinite group theory that will be used throughout the paper. Section \ref{sec:multilinearBIG} is devoted to the discussion of multilinearity in cryptography and algebra (Section \ref{sec:multilinearity}) as well as the introduction of a key exchange protocol based on multilinear maps in groups (Section \ref{sec:ProtocolI}). In Section \ref{sec:analysis-generic} we discuss the security of the key exchange protocol in the case of generic $p$-groups, while in Section \ref{sec:examples} we discuss some concrete examples.

\begin{thx}
The authors are very thankful to Bernd Sturmfels for putting them in contact and to Pooya Farshim for his useful comments on an early version of this manuscript. 
They are grateful to the University of York, Department of Computer Science, for their hospitality during the time where this work was initiated. 
They also wish to thank the anonymous referees for their helpful feedback. 
\end{thx}

\section{Preliminaries and notation}\label{sec:notation}

The current section includes definitions of and intuition surrounding the mathematical objects playing a role in this paper. Our exposition will be rather synthetic, but we refer the interested reader to
 \cite{Isaacs/08} or \cite{Huppert/67} for more on finite groups and to \cite{Wilson/98} or \cite{RibbZal/10} for more on profinite groups.
 
 Given two groups $G$ and $H$, we will denote by $\Maps(G,H)$ the collection of all functions $G\rightarrow H$ equipped with the group structure inherited from $H$.

\subsection{Commutators and nilpotency}

Let $G$ be a finite group, $X$ a subset of $G$, and $g$ an element in $G$. We will use the standard notations
$|G|$, $|g|$, $\gen{X}$, $\Cyc_G(H)$, to denote the order of $G$, the order of $g$, the subgroup of $G$ generated by $X$, and the centralizer of $H$ in $G$, respectively. 
For elements $x,y\in G$, the \emph{commutator of $x$ with $y$} is 
$$
[x,y]=xyx^{-1}y^{-1} \textup{ implying that } xyx^{-1}=[x,y]y.
$$ 
In particular, conjugation is a left action and, for $x_1,\ldots,x_n\in G$, we group commutators in the following way
\begin{equation}\label{eq:commgrouping}
[x_1,\ldots,x_n]=[x_1,[x_2,[\ldots,[x_{n-1},x_n]]\ldots]].
\end{equation}
If $H$ and $K$ are subgroups of $G$, the \emph{commutator of $H$ and $K$} is the subgroup
\[
[H,K]=\gen{[h,k] \mid h\in H, k\in K}.
\]
We denote, moreover, by $(\gamma_i(G))_{i\geq 1}$ the \emph{lower central series of $G$}, which is recursively defined by
\[
\gamma_1(G)=G \textup{ and } \gamma_{i+1}(G)=[G,\gamma_i(G)],
\]
and by $(\ZG_i(G))_{i\geq 0}$ the \emph{upper central series of $G$}, which is recursively defined by
\[
\ZG_0(G)=\graffe{1} \textup{ and } \ZG_{i+1}(G)/\ZG_i(G)=\ZG(G/\ZG_i(G)).
\]
The \emph{(nilpotency) class of $G$} is $c-1$ where $c$ is the smallest index for which $\gamma_c(G)=1$; equivalently, the class of $G$ is the smallest index $c$ for which $\ZG_c(G)=G$ \cite[Ch.\ 4A]{Isaacs/08}.  In some sense, the class of a group measures ``how far the group is from being abelian''.
If the class of $G$ is finite, then $G$ is said to be \emph{nilpotent};
for a wide overview of nilpotent groups, we refer to \cite[Ch.\ III]{Huppert/67}. An important subclass of that of nilpotent groups is the class of prime power order groups. For a fixed prime $p$, we will call $G$ a \emph{$p$-group} if the order of $G$
 is equal to a power of $p$.

\subsection{Profinite groups}
\label{sec:profinite}

\noindent
Let $(I,\leq )$ be a \emph{directed partially ordered set}, i.e. $\leq$ is a partial order on the set $I$, where the following additional condition is satisfied: 
\[
\textup{ for each } i,j\in I \textup{ there exists } k\in I \textup{ such that } i,j\leq k.
\]
An \emph{inverse} or \emph{projective system of finite groups over $I$} consists of the following data: \begin{itemize}
    \item a collection $\graffe{G_i\mid i\in I}$ of finite groups equipped with the discrete topology, 
    \item for each $i,j\in I$ with $i\geq j$, a homomorphism $\phi_{ij}:G_i\rightarrow G_j$ such that, whenever $i\geq j\geq k$ in $I$, the following diagram commutes
    \[
    \xymatrix{
    G_i \ar[dr]_{\phi_{ij}}\ar[rr]^{\phi_{ik}}&  & G_k \\
        & G_j \ar[ur]_{\phi_{jk}}&
    }
    \]
\end{itemize}
The \emph{inverse} or \emph{projective limit} of the inverse system $(G_i, \phi_{ij})_I$ is the set
\begin{equation}\label{eq:projlim}
    \varprojlim_{i\in I} G_i = \graffe{(g_i)_{i\in I}\in \prod_{i\in I}G_i \mid \textup{ for each } i\geq  j \textup{ one has } \phi_{ij}(g_i)=g_j }
\end{equation}
endowed with the relative product topology. It is not difficult to show that the set in \eqref{eq:projlim} is a group and any such group is called a \emph{profinite group}. An analogous construction for rings, yields the definition of \emph{profinite rings}.

It is clear from their definition that profinite groups are topological groups and it is a standard fact that they are compact, Hausdorff, and totally disconnected \cite[Thm.\ 2.1.3]{RibbZal/10}. If all the groups in the projective system are $p$-groups, then the associated profinite group is called a \emph{pro-$p$ group}. All finite $p$-groups are pro-$p$ groups, while an example of an infinite pro-$p$ group is the underlying additive group in the profinite ring $\Z_p$ of \emph{$p$-adic integers}. Indeed, one has that
\[
\Z_p=\varprojlim_{n\geq 0}\Z/(p^n)
\]
where the projective system is indexed by the nonnegative integers and the maps are, for $m\leq n$, the natural projections $\Z/(p^n)\rightarrow\Z/(p^m)$. In general, if all groups in the projective system satisfy some property $\mathcal{P}$, then the resulting profinite group is called a \emph{pro-$\mathcal{P}$ group}: examples are procyclic groups, proabelian groups, pronilpotent groups, etc.
\vspace{10pt}\\
\noindent
We close Section \ref{sec:profinite} with a synthetic collection of properties of profinite groups and their subgroups. To this end, for a subset $X$ of a profinite group $G$, we denote by $\cl(X)$ the closure of $X$ in the profinite topology of $G$ and we say that $G$ is \emph{topologically generated by $X$} if $G$ equals the closure of the abstract subgroup generated by $X$, i.e.\ $G=\cl(\gen{X})$. It is not difficult to show that, if $X$ is a subgroup of a profinite group $G$, then $\cl(X)$ is itself a profinite subgroup of $G$ \cite[Prop.~2.2.1(a)]{RibbZal/10}. 
For this reason, characteristic subgroups in finite groups are defined, in the profinite setting, by taking the closure of their finite (abstract) analogues. For example, the lower central series of a profinite group $G$ is defined recursively by
\[
\gamma_1(G)=G \textup{ and } \gamma_{i+1}(G)=\cl([G,\gamma_i(G)]).
\]
In particular, pro-$p$ groups are necessarily pronilpotent (equivalently all of their finite quotients are nilpotent) but not necessarily nilpotent: we refer the reader to Section \ref{sec:multi-pro} for an application of this observation.

\section{Multilinearity and NIKE}\label{sec:multilinearBIG}

Since the seminal paper by Boneh and Silverberg \cite{BonehSilverberg/03}, multilinear maps have become very popular in cryptography; see for instance \cite{survey/16} for an overview. In this section, we define what \emph{multilinearity} means in the context of this paper and present a key exchange protocol introduced in \cite{KTT/20}.

\subsection{Multilinearity in algebra and cryptography}\label{sec:multilinearity}

Let $n\geq 2$ be an integer. Classically, a \emph{cryptographic $n$-multilinear map} is a map 
\[
e: G^n \longrightarrow T, \quad (g_1,\ldots,g_n) \longmapsto e(g_1,\ldots, g_n)
\]
for which the following are satisfied:
\begin{enumerate}[label=$(\textup{M}\arabic*)$]
 \item $G$ and $T$ are cyclic groups in which products and inverses are \emph{efficiently computable}, 
 \item $e$ is \emph{efficiently computable} and $\Z$-linear in each component: for each tuple $(a_1,\ldots,a_n)$ of integers and elements $g_1,\ldots,g_n\in G$, one has $$e(g_1^{a_1}, \ldots, g_n^{a_n})=e(g_1,\ldots,g_n)^{a_1\cdots a_n},$$
 \item there is no efficient algorithm to compute discrete logarithms in $G$,
 \item $e$ is \emph{non-degenerate} in the sense that, if $g$ is a generator of $G$, then $e(g,\ldots, g)$ is a generator of $T$;
\end{enumerate}
see for example \cite[\S 2]{BonehSilverberg/03}. 
Producing cryptographic multilinear maps is far from being an easy task given the dichotomy between (M2) and (M3): as of today and to the best of our knowledge, there are indeed no known cryptographic multilinear maps in the sense of Boneh and Silverberg. 

In the present paper, we use the terms `multilinear' and `non-degenerate' in the algebraic sense. As a consequence of our Proposition \ref{prop:KeyPropocolI} the relevant properties of the maps we will be aiming for will be equivalents of (M1)--(M4).
The following conventions will be adopted. Let $G_1,\ldots, G_n, T$ be groups and, for each $i\in\graffe{1,\ldots, n}$, define 
$G^{-i}=G_1\times\ldots\times G_{i-1}\times G_{i+1}\times \ldots\times G_n$. 
A map $e:G_1\times\ldots \times G_n\rightarrow T$ is said to be \emph{$n$-multilinear} (or simply \emph{multilinear}) if, for each integer $i\in\graffe{1,\ldots,n}$ and each tuple ${\bf x}=(x_1,\ldots,x_{i-1},x_{i+1},\ldots,x_n)\in G^{-i}=G_1\times\ldots\times G_{i-1}\times G_{i+1}\times \ldots\times G_n$
, the map 
\[
e_{\bf x}^{(i)}: G_i\longrightarrow T, \quad g \longmapsto e(x_1,\ldots,x_{i-1},g,x_{i+1},\ldots,x_n),
\] 
is a group homomorphism. 
Equivalently, $e$ is multilinear if it is a homomorphism componentwise. We call a map \emph{bilinear} or \emph{trilinear} if it is $2$-linear or $3$-linear, respectively.
Observe now that, for each $i\in\graffe{1,\ldots, n}$, a multilinear map $e:G_1\times \ldots\times G_n\rightarrow T$ has the property that 
\begin{align}\label{eq:def-e_i}
e_i:G_i &\longrightarrow \Maps(G^{-i}, T), \textup{ defined by} \\ \nonumber
g & \longmapsto e_i(g): {\bf x}=(x_1,\ldots,x_{i-1},x_{i+1},\ldots,x_n) \longmapsto e(x_1,\ldots,x_{i-1},g,x_{i+1},\ldots,x_n),
\end{align} 
is a group homomorphism.
 A map $e:G_1\times\ldots\times G_n\rightarrow T$ is called \emph{non-degenerate} if it is multilinear and, for each $i\in\graffe{1,\ldots,n}$, the homomorphism $e_i$ is injective, equivalently, for each $g\in G_i$, one has 
\[
e(G_1\times\ldots\times G_{i-1}\times \graffe{g}\times G_{i+1}\times \ldots\times G_n)=\graffe{1} \Longrightarrow g=1.
\] 
For example, if $n=2$ and $e$ is non-degenerate, then the induced injective homomorphisms $e_1$ and $e_2$ are actually injective homomorphisms 
\[
G_1\longrightarrow \Hom(G_2,T) \textup{ and } G_2\longrightarrow \Hom(G_1,T),
\]
respectively. The following proposition collects a number of classical properties linking nilpotent groups to multilinear maps.

\begin{proposition}\label{prop:multilinear}
Let $G$ be a nilpotent group of class $n\geq 2$. Then the following hold:
\begin{enumerate}[label=$(\arabic*)$]
 \item the commutator map $G\times\gamma_{n-1}(G)\rightarrow \gamma_n(G)$ is bilinear;
 \item the commutator map $G^n\rightarrow \gamma_n(G)$ is multilinear;
 \item if $n=2$, then the map $G/\ZG(G)\times G/\ZG(G)\rightarrow \gamma_2(G)$ that is defined by
 \[
(g_1\ZG(G),g_2\ZG(G)) \longmapsto [g_1,g_2] 
 \]
 is non-degenerate.
\end{enumerate} 
\end{proposition}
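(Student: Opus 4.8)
Since these are classical facts of commutator calculus I would only sketch the argument. The plan is to derive all three parts from the two elementary identities
\[
[x_1x_2,y]=x_1[x_2,y]x_1^{-1}\cdot[x_1,y] \quad\text{and}\quad [x,y_1y_2]=[x,y_1]\cdot y_1[x,y_2]y_1^{-1},
\]
valid for arbitrary elements of $G$ with the left-action conventions fixed above, together with the observation that a group $G$ of class $n$ satisfies $[G,\gamma_n(G)]=\gamma_{n+1}(G)=\graffe{1}$, so that $\gamma_n(G)\leq\ZG(G)$.

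For (1), fix $y\in\gamma_{n-1}(G)$. In the first identity one has $[x_2,y]\in[G,\gamma_{n-1}(G)]=\gamma_n(G)\leq\ZG(G)$, so conjugation by $x_1$ leaves it unchanged and, as $\gamma_n(G)$ is abelian, $[x_1x_2,y]=[x_1,y][x_2,y]$; linearity in the second variable is symmetric, using $[x,y_2]\in\gamma_n(G)\leq\ZG(G)$ in the second identity. This gives bilinearity. I would also record that this map annihilates $\gamma_n(G)$ in its second argument (because $[x,\gamma_n(G)]\subseteq\gamma_{n+1}(G)=\graffe{1}$), hence factors through a bilinear map $\widetilde{\varphi}\colon G\times\bigl(\gamma_{n-1}(G)/\gamma_n(G)\bigr)\to\gamma_n(G)$, which will be reused for (2).

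For (2), I would induct on $n$, the base case $n=2$ being (1) for $n=2$. Let $n\geq3$ and put $\overline{G}=G/\gamma_n(G)$, a group of class $n-1$. By the inductive hypothesis the commutator map $\overline{G}^{\,n-1}\to\gamma_{n-1}(\overline{G})=\gamma_{n-1}(G)/\gamma_n(G)$ is multilinear, so precomposing with the reduction homomorphism $G\to\overline{G}$ in each coordinate shows that $c\colon G^{n-1}\to\gamma_{n-1}(G)/\gamma_n(G)$, $(x_2,\ldots,x_n)\mapsto[x_2,\ldots,x_n]\gamma_n(G)$, is multilinear. Since $[x_1,\ldots,x_n]=\widetilde{\varphi}\bigl(x_1,c(x_2,\ldots,x_n)\bigr)$, linearity in $x_1$ is linearity of $\widetilde{\varphi}$ in its first slot, and for $i\geq2$ the map $x_i\mapsto[x_1,\ldots,x_n]$ is the composite of the homomorphism $x_i\mapsto c(x_2,\ldots,x_n)$ with the homomorphism $\widetilde{\varphi}(x_1,-)$, hence a homomorphism.

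For (3), write $\overline{e}$ for the map in the statement. It is well defined since, for $z\in\ZG(G)$, one has $[g_1z,g_2]=g_1[z,g_2]g_1^{-1}\cdot[g_1,g_2]=[g_1,g_2]$ (as $[z,g_2]=1$), and symmetrically in the other coordinate; hence $\overline{e}$ is induced by the bilinear commutator map $G\times G\to\gamma_2(G)$ of (1) and is itself bilinear. For non-degeneracy, if $g\in G$ satisfies $[g,h]=1$ for all $h\in G$, then $g\in\Cyc_G(G)=\ZG(G)$, i.e.\ $g\ZG(G)$ is trivial in $G/\ZG(G)$, which gives injectivity of $\overline{e}_1$; the map $\overline{e}_2$ is handled identically, using $[g_1,g_2]=[g_2,g_1]^{-1}$. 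There is no genuine obstacle here: the only points needing care are respecting the ordering of terms imposed by the left-action convention and, in (2), checking that reduction modulo $\gamma_n(G)$ really converts the nested commutator $[x_2,\ldots,x_n]$ into a multilinear map into $\gamma_{n-1}(G)/\gamma_n(G)$, which is exactly where the inductive hypothesis is used.
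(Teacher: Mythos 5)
Your proof is correct and follows essentially the same route as the paper: part (1) via centrality of $\gamma_n(G)$, part (2) by induction using the decomposition $[x_1,\ldots,x_n]=[x_1,[x_2,\ldots,x_n]]$ combined with (1), and part (3) via $\ker e_1=\ker e_2=\ZG(G)$. Your inductive step is in fact the careful version of the paper's one-line argument: passing to $G/\gamma_n(G)$ and factoring the bilinear map through $\gamma_{n-1}(G)/\gamma_n(G)$ makes precise the paper's assumption that ``$e^{n-1}$ is multilinear'', which for $G$ itself only holds modulo $\gamma_n(G)$.
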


\begin{proof}
It is not difficult to show that, if $H$ and $K$ are subgroups of $G$, then the map
\[
H\times K \longrightarrow [H,K], \quad (h,k) \longmapsto [h,k],
\]
is bilinear if and only if $[H,K]$ is central in the group $\gen{H,K}$ generated by $H$ and $K$.

(1) This clearly follows from the fact that $\gamma_{n+1}(G)=1$ equivalently $\gamma_n(G)$ is central in $G$.

(2) We work by induction on $n$ and write $e^n$ for the commutator map $G^n\rightarrow \gamma_n(G)$, where commutators are grouped as in \eqref{eq:commgrouping}. Now, the base of the induction corresponds to $n=2$ and is equivalent to (1). Assume now that $n>2$ and that $e^{n-1}$ is multilinear. Observing that, for each $(x,{\bf y})\in G\times G^{-1}$, one has $e^n(x, {\bf{y}})=[x,e^{n-1}({\bf y})]$ and combining it with (1) yields the claim. 

(3) This is a combination of (1) and the fact that $\ker e^{2}_1=\ker e^{2}_2=\ZG(G)$.
\end{proof}

\subsection{Protocol I: a Non-Interactive Key Exchange}\label{sec:ProtocolI}

Let $n\geq 2$ be an integer. If $n=2$, Mahalanobis and Schinde gave a way for $3$ users to publicly share a secret key in the form of a commutator in a group of class $2$ \cite{MahShinde/17}. In \cite[Protocol~I]{KTT/20}, the authors generalize the last construction to allow $n+1$ users to share a key in the form of a commutator of weight $n$ in a group of class $n$ and, in \cite{KTT/21}, they address the issues of efficiency and security for certain families of groups. We here briefly recall the main steps of Protocol I. To this end, let $G$ be a nilpotent group of class $n$. 
\begin{itemize}
\item Public info: $g_1,\ldots,g_{n}\in G$.
\item Users: $\cor{A}_1,\ldots,\cor{A}_{n+1}$, each of which has chosen an element $a_i\in\Z$.
\item Private keys: $a_1,\ldots,a_{n+1}$.
\item Publicly shared data: $g_i^{a_j}$ where $i\in\graffe{1,\ldots, n}$ and $j\in\graffe{1,\ldots,n+1}$.
\item Shared secret key: $[g_1,\ldots,g_n]^{a_1\cdots a_{n+1}}$, which each user can compute from the shared data and its private key using multilinearity of the commutator map.
\end{itemize}
For example, $\cor{A}_1$ can recover the shared key via computing
\[
[g_1^{a_{n+1}},g_2^{a_2}, \ldots,g_n^{a_n}]^{a_{1}}=[g_1,\ldots,g_n]^{a_1a_2\cdots a_{n+1}}.
\]
Call now $c=[g_1,\ldots, g_n]$ and $a=a_1\cdots a_{n+1}$. The security of Protocol~I  is connected to the difficulty of recovering, from the public information, any of the $a_j$'s or $a$. Indeed, for any $j\in \graffe{a_1,\ldots, a_{n+1}}$, an eavesdropper can use the public information to compute
\begin{enumerate}
\item $c=[g_1,\ldots,g_n]$ from $g_1,\ldots,g_n$,
\item $c^{a/a_j}$ for example in the following way:
\[
c^{a/a_j}=\begin{cases}
[g_1^{a_{n+1}},g_2^{a_2},\ldots ,a_n^{a_n}]& \textup{if } j=1, \\
[g_1^{a_1},\ldots, g_{j-1}^{a_{j-1}}, g_j^{a_{n+1}},g_{j+1}^{a_{j+1}},\ldots, a_n^{a_n}] & \textup{if } 1<j<n+1, \\
[g_1^{a_1},g_2^{a_2},\ldots, g_n^{a_n}] & \textup{if } j=n+1.
\end{cases}
\]
\end{enumerate}
We will, in Section \ref{sec:analysis-generic}, concretize what ``recovering $a_j$'' actually means in the context of Protocol I and show how, in some sense, the choice of $j$ is irrelevant (see Proposition \ref{prop:KeyPropocolI}). The next example shall serve as a first hint in this direction.

\begin{example}\label{ex:class2}
Assume $n=2$, equivalently $G$ is nilpotent of class $2$ and $\gamma_2(G)$ is central. 
Then, by Proposition \ref{prop:multilinear}, the commutator map $G\times G\rightarrow \gamma_2(G)$ is bilinear and induces the non-degenerate map 
\[
e:G/\ZG(G)\times G/\ZG(G)\rightarrow\gamma_2(G), \quad (x\ZG(G),y\ZG(G))\mapsto [x,y].
\]
In particular, if $\alpha\in\Z$ and $\overline{g_i}=g_i\ZG(G)$, then we have
\[
c^{\alpha \bmod |c|} = c^{\alpha} = [g_1,g_2]^{\alpha}=e(\overline{g_1},\overline{g_2})^{\alpha}=e(\overline{g_1}^{\alpha\bmod |\overline{g_1}|},\overline{g_2})=e(\overline{g_1},\overline{g_2}^{\alpha\bmod |\overline{g_2}|}).
\]
A consequence of the last series of equalities is that, if $j\in\graffe{1,2,3}$ and an Eavesdropper can establish $a_j\bmod\min\graffe{|c|,|\overline{g_1}|,|\overline{g_2}|}$, then they can also determine $c^a$.
\end{example}

\noindent
We conclude the present section extending Example~\ref{ex:class2} to the case of trilinear maps, cf.\ Example~\ref{ex:class3}. The following lemma can be found in any standard text in group theory; see e.g.\ \cite[Cor.\ 4.10]{Isaacs/08}.

\begin{lemma}[Three subgroups lemma]\label{lemma:3sbgs}
Let $G$ be a group and let $N$ be a normal subgroup of $G$. Let moreover $A,B,C$ be subgroups of $G$ such that $[A,B,C]$ and $[B,C,A]$ are contained in $N$. Then $[C,A,B]$ is contained in $N$.
\end{lemma}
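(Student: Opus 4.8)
The plan is to deduce the lemma from the Hall--Witt identity after two routine structural reductions. First I would eliminate $N$: since $N\trianglelefteq G$, I pass to $\overline{G}=G/N$ and write $\overline{H}=HN/N$ for $H\le G$. Picking coset representatives inside each subgroup one checks that $[\overline{X},\overline{Y}]=[X,Y]N/N$, and iterating this, $[\overline{X},\overline{Y},\overline{Z}]=[X,Y,Z]N/N$ for all subgroups $X,Y,Z$ of $G$; hence $[X,Y,Z]\subseteq N$ if and only if $[\overline{X},\overline{Y},\overline{Z}]=\graffe{1}$. It therefore suffices to prove the lemma when $N=\graffe{1}$, i.e.\ that $[A,B,C]=\graffe{1}$ and $[B,C,A]=\graffe{1}$ together force $[C,A,B]=\graffe{1}$.

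Next I would pass from subgroups to elements. For subgroups $H,K,L\le G$ the subgroup $[[H,K],L]$ is generated by the iterated commutators $[[h,k],l]$ with $h\in H$, $k\in K$, $l\in L$ (one uses that $\C_G(L)$ is a subgroup, so membership of $[H,K]$ in $\C_G(L)$ can be tested on the generators $[h,k]$ of $[H,K]$). Consequently $[[H,K],L]=\graffe{1}$ precisely when $[[h,k],l]=1$ for all such $h,k,l$. Applying this to the two hypotheses and to the conclusion, it remains to show: if $[[a,b],c]=1$ for all $a\in A$, $b\in B$, $c\in C$ and $[[b,c],a]=1$ for all such, then $[[c,a],b]=1$ for all such.

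Finally, fixing $a\in A$, $b\in B$, $c\in C$, I would invoke the Hall--Witt identity --- the commutator analogue of the Jacobi identity --- which asserts that for any three elements $x,y,z$ of a group a suitable product of three conjugates equals the identity, the conjugated factors being iterated commutators built cyclically from $[x,y^{\pm1}]$ and $z$, from $[y,z^{\pm1}]$ and $x$, and from $[z,x^{\pm1}]$ and $y$. Substituting $x=a$, $y=b$, $z=c$, the first conjugated factor is a conjugate of an element of $[A,B,C]$ and the second a conjugate of an element of $[B,C,A]$, so by the reduced hypotheses both are trivial; the inverses occurring inside these commutators are immaterial, since the hypotheses (and the conclusion) are being used for all elements of the relevant subgroups, which are closed under inversion. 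Hence the third factor is trivial as well, and an element is trivial exactly when its conjugates are, so $[[c,a],b]=1$, as wanted.

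The one point demanding care is this last step: one must write the Hall--Witt identity in the commutator convention $[x,y]=xyx^{-1}y^{-1}$ and left conjugation used in this paper and confirm that, under the substitution above, the first two conjugated factors are genuinely of the types $[A,B,C]$ and $[B,C,A]$ and the third of type $[C,A,B]$. This amounts to a short mechanical expansion --- or one may simply import the identity, with the matching conventions, from \cite{Isaacs/08} or \cite{Huppert/67} --- after which the argument is purely formal.
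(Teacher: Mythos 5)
Your argument is correct: the reduction modulo $N$, the passage to generators via $\Cyc_G(L)$, and the Hall--Witt identity together constitute the standard proof of the three subgroups lemma, and you handle the usual pitfalls (inverses inside the commutators, conjugated factors, matching the convention $[x,y]=xyx^{-1}y^{-1}$) correctly. The paper itself gives no proof but simply cites \cite[Cor.~4.10]{Isaacs/08}, where the lemma is deduced from the Hall--Witt identity in exactly this way, so your proposal coincides with the intended argument.
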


\begin{example}\label{ex:class3}
Assume $n=3$, equivalently $G$ is nilpotent of class $3$ and $\gamma_3(G)$ is central. 
Then the commutator map $G\times G \times G\rightarrow \gamma_3(G)$ is trilinear and we claim that it induces a non-degenerate map 
\[
e:G/\Cyc_G(\gamma_2(G))\times G/\ZG_2(G) \times G/\ZG_2(G)\rightarrow\gamma_3(G).
\]
For our claim to hold true, we need to show that 
\[
[\Cyc_G(\gamma_2(G)),[G,G]]=[G,\ZG_2(G),G]=[G,G,\ZG_2(G)]=1
\]
and that $\Cyc_G(\gamma_2(G))$ and $\ZG_2(G)$ are maximal with the above properties. The claim is easily settled for $\Cyc_G(\gamma_2(G))$ as, by definition,  $\Cyc_G(\gamma_2(G))$ is the largest subgroup of $G$ centralizing $\gamma_2(G)=[G,G]$. We look now at $\ZG_2(G)$ and we call $K$ the second kernel of $e$, i.e.\ the largest subgroup of $G$ such that $[G,K,G]=1$. Note that $K=\ker e_3$, where $e_3$ is as in \eqref{eq:def-e_i}. Now, $[G,[K,G]]=1$ is equivalent to $[K,G]$ being central and so $K$ is maximal with the property that $[K,G]\subseteq \ZG(G)$. It follows from the definition of the upper central series that $K=\ZG_2(G)$. Since $[G,\ZG_2(G)]=[\ZG_2(G),G]$, a symmetric argument yields the claim. 

We next note that $\ZG_2(G)$ is contained in $\Cyc_G(\gamma_2(G))$. Indeed, since $[G,[G,\ZG_2(G)]]=[G,[\ZG_2(G),G]]=1,$ Lemma~\ref{lemma:3sbgs} yields that $[\ZG_2(G),[G,G]]=1$. 
It follows, in particular and in analogy to Example~\ref{ex:class2}, that, if $\overline{G}=G/\Cyc_G(\gamma_2(G))$, the determination of the shared key from the public information can be achieved via the determination of 
$a_j\bmod \min\{|c|,|\overline{g_1}|, |\overline{g_2}|, |\overline{g_3}|\}$.
\end{example}

We remark that, for any $n$, the multilinear maps that end up playing the real role in this context are multilinear maps on abelian groups, i.e.\ $\Z$-modules; in Example \ref{ex:class2} the abelian group is $G/\ZG(G)$ and in Example \ref{ex:class3} it is $G/\Cyc_G(\gamma_2(G))$.

\subsection{Degenerations and NIKE}\label{sec:deg&nike}
The key exchange protocol presented in Section \ref{sec:ProtocolI} is non-interactive, i.e.\ it enables the users, each of which has a private key, to agree on a symmetric shared key without any interaction. Probably the most known Non-Interactive Key Exchange scheme, in short NIKE, is the one by Diffie and Hellman \cite{DH/76} over cyclic groups. For more on NIKE, we refer the reader to \cite{NIKE}.

Within the context of non-interactive key exchange, we remark that, if $G$ is a finite nilpotent group of class $n\geq 2$, then degenerations of Protocol I yield symmetric key exchange schemes for any number of users between $2$ and $n+1$. Indeed, in the notation of Section \ref{sec:ProtocolI}, if $\cor{B}$ is a subset of $\graffe{\cor{A}_1,\ldots,\cor{A}_{n+1}}$ of cardinality at least $2$, then a shared key for the users in $\cor{B}$ can be computed by each user via assuming $$\graffe{a_1,\ldots,a_{n+1}}\setminus \graffe{a_i \mid \cor{A}_i\in\cor{B}}=\graffe{1}.$$ For example, if $n=5$ and $\cor{B}=\graffe{\cor{A}_1,\cor{A}_2}$, then a shared key for the users in $\cor{B}$ can be computed in the following way: 
\begin{itemize}
    \item $\cor{A}_1$ computes $[g_1^{a_2},g_2,\ldots,g_5]^{a_1}$ while
    \item $\cor{A}_2$ computes $[g_1^{a_1},g_2,\ldots,g_5]^{a_2}$.
\end{itemize}
We remark that, the security analysis in these degenerate cases is analogous to the one made in Section \ref{sec:analysis-generic} for Protocol I.

\section{Security in the generic case}\label{sec:analysis-generic}

The purpose of this section is to discuss the security of Protocol I from Section \ref{sec:ProtocolI} for generic groups. 
We will show that, if $G$ is a finite $p$-group and the element $c=[g_1,\ldots,g_n]$ has order $p^{\alpha}$,  then recovering the shared key $c^a$ from Protocol I from the public data is as hard as solving the CDH in a cyclic group of order $p^\alpha$, which is, in some cases, known to be as hard as solving the DLP; cf.\ \cite{denBoer/90,Maurer/94}.
In the second part of this section, we will employ pro-$p$ groups to produce multilinear maps of any rank. We will discuss security in this case, too.

\subsection{Reduction to cyclic groups}\label{sec:reduction}

In the present section, we discuss the security of Protocol I in the case of \emph{generic} $p$-groups, in the sense of \cite{Sutherland/11}; a more detailed reference is \cite[Ch.~1]{Sutherland/PHD}. The main result of this section is Proposition \ref{prop:KeyPropocolI} and the following example shall serve as a warm-up towards proving it.


\begin{example}\label{ex:Cclass2}
Let $p$ be a prime number, $\alpha$ a positive integer, and $C$ a cyclic group of order $p^\alpha$ equipped with a non-degenerate alternating map $e:C\times C\rightarrow  C$. Let $G$ be the group $C\times C\times C$ where the operation is given by 
\[
(x,y,z)(x',y',z')=(xx',yy',zz'e(x,y')).
\]
Then $G$ is a group of class $2$ satisfying $G/\ZG(G)\cong C\times C$ and $\gamma_2(G)\cong C$. Let now $g_1,g_2\in G$ be such that $c=[g_1,g_2]$ generates $\gamma_2(G)$ 
and note that $|g_1|=|g_2|=|[g_1,g_2]|=p^\alpha$. Let $\cor{A}_1,\cor{A}_2,\cor{A}_3$ be three users, each of which has chosen a secret element $a_i\in(\Z/(p^\alpha))^*$. The following information is shared publicly:
\begin{itemize}
    \item $\cor{A}_1$ shares $g_1^{a_1}$,
    \item $\cor{A}_2$ shares $g_2^{a_2}$,
    \item $\cor{A}_3$ shares $g_1^{a_3}$ and $g_2^{a_3}$.
\end{itemize}
Then each user can compute the shared key $[g_1,g_2]^{a_1a_2a_3}$. An eavesdropper can then compute
\[
c=[g_1,g_2],\ c^{a_1},\ c^{a_2},\  c^{a_3},\  c^{a_1a_2},\  c^{a_2a_3},\  c^{a_1a_3}
\]
and compute the shared key $c^a=c^{a_1a_2a_3}$ via, for instance, solving the discrete logarithm for $c^{a_1}$ and $c^{a_2a_3}$ with respect to $c$. 
\end{example}

\noindent
The following result shows that Example \ref{ex:Cclass2} is a particular instance of a more general phenomenon. We invite the reader to compare Proposition \ref{prop:KeyPropocolI} also with Examples \ref{ex:class2} and \ref{ex:class3}.

\begin{proposition}\label{prop:KeyPropocolI}
Let $p$ be a prime number, $G$ a finite generic $p$-group of class $n$, and $g_1,\ldots, g_n$ elements of $G$. Define $c=[g_1,\ldots,g_n]$ and let $p^\alpha$ denote the order of $c$. Then there exist a cyclic group $\tilde{G}$ of order $p^{\alpha}$, a non-degenerate multilinear map $\tilde{e}:\tilde{G}^n\rightarrow C$, and surjective maps $\phi_i:\gen{g_i}\rightarrow\tilde{G}$ such that $c=[g_1,\ldots,g_n]=\tilde{e}(\phi_1(g_1),\ldots,\phi_n(g_n))$.
\end{proposition}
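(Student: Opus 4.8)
The plan is to realise $\tilde e$ as the multilinear map induced by the commutator map on cyclic quotients of the cyclic subgroups $\gen{g_i}$, and then to transport everything onto a single abstract cyclic group; this is the same idea as in Examples~\ref{ex:class2} and \ref{ex:class3}, now carried out relative to the subgroups generated by the $g_i$.

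First I would set $G_i=\gen{g_i}$ and restrict the commutator map of Proposition~\ref{prop:multilinear}(2) to a multilinear map $e\colon G_1\times\dots\times G_n\to\gamma_n(G)$. Since each $G_i$ is cyclic, multilinearity gives $e(g_1^{m_1},\dots,g_n^{m_n})=c^{m_1\cdots m_n}$, so the image of $e$ is exactly $C=\gen{c}$ and $e$ may be regarded as a multilinear map into $C$. Writing $K_i=\ker e_i\subseteq G_i$ for $e_i$ as in \eqref{eq:def-e_i} --- equivalently, $K_i$ is the largest subgroup of $G_i$ with $[G_1,\dots,G_{i-1},K_i,G_{i+1},\dots,G_n]=1$ --- a routine check using multilinearity shows that $e$ descends to a multilinear map $\bar e\colon(G_1/K_1)\times\dots\times(G_n/K_n)\to C$, and that $\bar e$ is non-degenerate because each component has been divided out by its own kernel.

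The one substantive step is the claim that $[G_i:K_i]=p^\alpha$ for every $i$. For ``$\le$'', the subgroup $\gen{g_i^{p^\alpha}}$ satisfies $[G_1,\dots,\gen{g_i^{p^\alpha}},\dots,G_n]=1$, since by multilinearity each of its generating commutators is a power of $c^{p^\alpha}=1$; maximality of $K_i$ then gives $g_i^{p^\alpha}\in K_i$, whence $[G_i:K_i]\le[G_i:\gen{g_i^{p^\alpha}}]\le p^\alpha$. For ``$\ge$'', since $G_i$ is cyclic we have $K_i=\gen{g_i^{[G_i:K_i]}}$, so $c^{[G_i:K_i]}=e(g_1,\dots,g_i^{[G_i:K_i]},\dots,g_n)=1$, forcing $p^\alpha=|c|$ to divide $[G_i:K_i]$. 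Hence $G_i/K_i$ is cyclic of order $p^\alpha$, generated by $\overline{g_i}=g_iK_i$.

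To conclude, I would fix a cyclic group $\tilde G$ of order $p^\alpha$ with generator $t$, take $\theta_i\colon G_i/K_i\xrightarrow{\ \sim\ }\tilde G$ to be the isomorphism sending $\overline{g_i}$ to $t$, and set $\phi_i=\theta_i\circ\pi_i\colon\gen{g_i}\to\tilde G$ with $\pi_i$ the quotient map, so that $\phi_i$ is surjective and $\phi_i(g_i)=t$. Defining $\tilde e\colon\tilde G^n\to C$ by $\tilde e(x_1,\dots,x_n)=\bar e(\theta_1^{-1}(x_1),\dots,\theta_n^{-1}(x_n))$, pre-composition with isomorphisms in each component preserves multilinearity and non-degeneracy, and $\tilde e(\phi_1(g_1),\dots,\phi_n(g_n))=\tilde e(t,\dots,t)=\bar e(\overline{g_1},\dots,\overline{g_n})=e(g_1,\dots,g_n)=[g_1,\dots,g_n]=c$, which is exactly the required identity. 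I expect the index computation $[G_i:K_i]=p^\alpha$ to be the only genuine obstacle; the descent of $e$ to $\bar e$, the non-degeneracy of $\bar e$, and the transport to $\tilde G$ are all bookkeeping that I would carry out in the routine way.
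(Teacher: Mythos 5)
Your proposal is correct and follows essentially the same route as the paper: restrict the commutator map to the cyclic subgroups $G_i=\gen{g_i}$, quotient by the kernels $K_i=\ker e_i$ to get a non-degenerate multilinear map into $C=\gen{c}$, prove $|G_i:K_i|=p^\alpha$ by the same two choices of exponent, and identify all $G_i/K_i$ with a single cyclic group $\tilde{G}$ via the canonical projections. Your extra bookkeeping (explicit isomorphisms $\theta_i$ and the verification of the final identity) only spells out what the paper leaves implicit.
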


\begin{proof}
For each $i\in\graffe{1,\ldots,n}$ denote $G_i=\gen{g_i}$ and, for $e:G_1\times\ldots\times G_n\rightarrow\gamma_n(G)$ denoting the restriction of the commutator map from Proposition \ref{prop:multilinear}(2), let $K_i$ be the kernel of $e_i$ as defined in \eqref{eq:def-e_i}. In particular, $K_i\subseteq G_i$ is a subgroup, maximal with the property that $[G_1,\ldots,G_{i-1},K_i,G_{i+1},\ldots, G_n]=1$. Set, moreover, $C=\gen{c}$ and write $|C|=p^{\alpha}$. Then $e$ induces a non-degenerate multilinear map
\[
\tilde{e}:G_1/K_1\times G_2/K_2 \times \ldots \times G_n/K_n\longrightarrow C.
\]
We now claim that, for each $i$, one has $|G_i:K_i|=|C|=p^{\alpha}$. For this, fix $i\in\graffe{1,\ldots, n}$ and $m\in\Z$. From the multilinearity of $\tilde{e}$ it follows that 
\[
\tilde{e}(g_1K_1,\ldots,g_{i-1}K_{i-1},g_i^{m}K_i,g_{i+1}K_{i+1},\ldots g_nK_n)=[g_1,\ldots, g_n]^m=c^{m}. 
\]
Choosing $m=p^{\alpha}$, we derive that $g_i^m\in K_i$ and so that $|G_i:K_i|\leq p^{\alpha}$ while, choosing $m=|G_i:K_i|$, we conclude that $c^{|G_i:K_i|}=1$ and thus $p^{\alpha}\leq |G_i:K_i|$. 

We have proven that all $G_i/K_i$ are mutually isomorphic and isomorphic to a cyclic group $\tilde{G}$ of order $p^{\alpha}$. Identifying all quotients $G_i/K_i$ with $\tilde{G}$ and letting, for each $i$, the map $\phi:G_i\rightarrow G_i/K_i$ being the canonical projection yields the claim.
\end{proof}

\noindent
We remark that, thanks to the discussion in the proof of Proposition \ref{prop:KeyPropocolI} and the fact that $p$-groups are polycyclic, one sees that Protocol I and its degenerations are graded encoding schemes \cite[\S~1.3]{survey/16}.

\subsection{The connection to CDH}\label{sec:CDH}

The Computational Diffie-Hellman problem, in short CDH, asks to determine, given a cyclic group $C$ and a triple $(c,c^x,c^y)$ with $c\in C$ and $x,y\in\Z$, the element $c^{xy}$. The security of the Diffie-Hellman key exchange \cite{DH/76} links to the hardness of CDH, which itself reduces to DLP. As of today and to the best of our knowledge, however, it is not know whether CDH and DLP are equivalent, though it is proven to be the case sometimes, cf.\ \cite{denBoer/90,Maurer/94}. The next proposition relates the security of Protocol~I to CDH and DLP.

\begin{proposition}\label{prop:CDH}
Let $p$ be a prime number, $G$ a finite generic $p$-group of nilpotency class $n$, and $g_1,\ldots, g_n$ elements of $G$. Define $c=[g_1,\ldots,g_n]$. 
If \emph{CDH} is hard in the cyclic $p$-group $\gen{c}$ generated by $c$, then Protocol~I is secure in the presence of an eavesdropper. 
\end{proposition}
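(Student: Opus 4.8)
I would argue by contraposition. Suppose there is an efficient eavesdropper $\cor{E}$ which, on input the public data of a run of Protocol~I, returns the shared secret key; the goal is to turn $\cor{E}$ into an efficient algorithm that, on input a triple $(c,c^x,c^y)$ with $x,y\in\Z$, outputs $c^{xy}$, thereby solving CDH in $C=\gen{c}$.

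The first step is to recall the structure exposed in the proof of Proposition~\ref{prop:KeyPropocolI}. Writing $G_i=\gen{g_i}$ and $K_i\leq G_i$ for the kernel of $e_i$, the restriction of the commutator map induces a non-degenerate multilinear map $\tilde{e}\colon G_1/K_1\times\cdots\times G_n/K_n\to C$; every quotient $G_i/K_i$ is cyclic of order $p^{\alpha}=|c|$; and, for each $m\in\Z$, one has $[g_1,\ldots,g_i^{m},\ldots,g_n]=c^{m}$, a value that depends only on $m\bmod p^{\alpha}$. Thus, up to the canonical projections $\phi_i\colon G_i\to G_i/K_i$, Protocol~I is governed by the ``standard'' pairing $C^{n}\to C$, $(c^{b_1},\ldots,c^{b_n})\mapsto c^{b_1\cdots b_n}$, exactly as in Example~\ref{ex:Cclass2}.

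The second step is to embed the CDH instance into a simulated run of Protocol~I. I would use the secret exponents $a_1=x$, $a_2=\cdots=a_n=1$ and $a_{n+1}=y$, so that the corresponding shared key is $[g_1,\ldots,g_n]^{a_1a_2\cdots a_{n+1}}=c^{xy}$. The public data of such a run consist of the (known) elements $g_1,\ldots,g_n$ together with the powers $g_i^{a_j}$; for $j\in\graffe{2,\ldots,n}$ these are just $g_i^{a_j}=g_i$, and the only nontrivial entries are the ``level-$i$ encodings'' $g_i^{x}$ and $g_i^{y}$. Once the full transcript is assembled, feeding it to $\cor{E}$ yields $c^{a_1\cdots a_{n+1}}=c^{xy}$, which the solver outputs; efficiency and the success probability are inherited from $\cor{E}$, with the usual re-randomization available to amplify the latter.

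The main obstacle is exactly the construction of the encodings $g_i^{x},g_i^{y}$ from the given data $c^{x},c^{y}$: in a generic group there is no efficient way back from a top-level value in $C$ to a level-$i$ encoding in $G_i$, that being essentially a discrete logarithm. I would circumvent this by letting the reduction choose the platform of Protocol~I itself, reading the statement as quantified over generic $p$-groups of class $n$ rather than over a single fixed $G$. Concretely, using Proposition~\ref{prop:KeyPropocolI} (equivalently, generalizing the construction of Example~\ref{ex:Cclass2}), one builds from $C$ a generic $p$-group $\widehat{G}$ of class $n$ with $\gamma_n(\widehat{G})=C$ whose public generators are ``coordinate'' elements $g_i$, for which $g_i^{m}$ is read off directly from $c^{m}$; since $\cor{E}$ breaks Protocol~I over such platforms, it breaks it over $\widehat{G}$, and the argument above goes through. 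The one point requiring genuine care is to check that this re-platforming is faithful: that $c=[g_1,\ldots,g_n]$, the relevant orders, and non-degeneracy all survive the passage to $\widehat{G}$.
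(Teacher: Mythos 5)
Your core reduction is exactly the paper's: set $a_1=x$, $a_2=\cdots=a_n=1$, $a_{n+1}=y$, note that from the transcript one computes $c=[g_1,\ldots,g_n]$, $c^x=[g_1^{a_1},\ldots,g_n^{a_n}]=c^{a_1\cdots a_n}$ and $c^y=[g_1^{a_{n+1}},g_2,\ldots,g_n]=c^{a_{n+1}}$, while the shared key is $c^{xy}$. The paper's proof stops at this point: it reads ``secure in the presence of an eavesdropper'' as saying that recovering the key from the public data entails answering the CDH instance $(c,c^x,c^y)$ that the transcript itself exhibits, and it does not attempt to simulate a full transcript starting from a bare CDH challenge. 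You go further, and you correctly isolate the obstacle to a genuine black-box reduction, namely that the level-$i$ encodings $g_i^{x},g_i^{y}$ cannot be produced from $c^x,c^y$ in a generic group.

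The gap is in your proposed repair. First, the proposition fixes $G$ and $g_1,\ldots,g_n$: the eavesdropper $\cor{E}$ is only assumed to succeed against Protocol~I on that platform, and the hypothesis ``CDH is hard in $\gen{c}$'' refers to the subgroup of that $G$; you are therefore not entitled to re-read the statement as quantified over platforms and feed $\cor{E}$ a transcript over a different group $\widehat{G}$, on which it need not perform at all. Second, even granting that freedom, the re-platforming cannot be carried out: to assemble and run a transcript over $\widehat{G}$ you need its group law (hence the pairing, resp.\ the commutator structure constants) to be efficiently computable, while also making $g_i^m$ readable from $c^m$; for a generic cyclic group $C$ the map $(c^s,c^t)\mapsto c^{st}$ underlying Example~\ref{ex:Cclass2} is itself the CDH problem, so the simulation is circular, and producing such a $\widehat{G}$ in general is precisely the open problem of building cryptographic multilinear maps that the paper recalls in Section~\ref{sec:multilinearity}. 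Proposition~\ref{prop:KeyPropocolI} does not help here either: it gives an abstract identification of the quotients $\gen{g_i}/K_i$ with a cyclic group of order $p^\alpha$, not an efficiently computable realization. So the honest options are the paper's weaker reading (the transcript already contains a CDH instance whose answer is the key), or strengthening the hypothesis to CDH hardness in the presence of the auxiliary data $g_i^{a_j}$ — which is exactly what your re-platforming was trying, but fails, to remove.
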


\begin{proof}
Let $\cor{A}$ be an adversary: we show that, if $\cor{A}$ can compute the shared key from the public information, then $\cor{A}$ can solve the CDH in the cyclic $p$-group $\gen{c}$ generated by $c$.  
Observe that, if $a_1,\ldots,a_{n+1}$ are the elements in $\Z$ chosen by the users $\cor{A}_1,\ldots,\cor{A}_n$, then, from the following (publicly shared) information
\begin{itemize}
    \item $g_1,\ldots, g_n$,
    \item $g_i^{a_j}$ for $i\in\graffe{1,\ldots,n}$, $j\in\graffe{1,\ldots, n+1}$,
\end{itemize}
 and using multilinearity, $\cor{A}$ can compute $c=[g_1,\ldots,g_n]$ and $c^{a_1\cdots a_n}$ and $c^{a_{n+1}}$. 
Moreover, by assumption, the adversary $\cor{A}$ is also able to retrieve $c^a$ from the public information. 

In particular, setting $a_1=x$, $a_2=\ldots=a_n=1$, and $a_{n+1}=y$, the adversary $\cor{A}$ is able to compute 
\[
c=[g_1,\ldots,g_n], \quad c^x=c^{a_1\cdots a_n}, \quad c^y=c^{a_{n+1}}
\]
and, consequently, also $c^a=c^{xy}$  from the publicly shared information from Protocol~I. 
\end{proof}

\begin{example}
Let $p$ be a prime number and let $G$ be a group of maximal class, equivalently, if the class of $G$ is $n$, then the order of $G$ is $p^{n+1}$. It is then an easy exercise to show that $|\gamma_n(G)|=p$ and thus, thanks to Proposition~\ref{prop:CDH},  an eavesdropper who can solve the DLP in a cyclic group of order $p$, can also compute the shared key from Protocol~I from the publicly shared information.
\end{example}

\noindent
We note that, since for each pair $(p,n)$, there exists a $p$-group of maximal class $n$ (for example as a consequence of \cite[Thm.~4.3]{Blackburn/58}), the class of $p$-groups of maximal class provides platforms for Protocol~I for an arbitrary number of users in which the cost of recovering the key from the public information is likely to be comparable to solving the DLP in a cyclic group of order $p$.

\subsection{Algorithms for the discrete logarithm problem}\label{sec:algorithms}

In this section, we give a short overview of the existing algorithms solving the discrete logarithm problem in generic (cyclic and) abelian $p$-groups and of their computational costs. To this end, let $p$ be a prime number, $\alpha$ a positive integer and $C$ a generic cyclic group of order $p^\alpha$. 

With the Pohlig-Hellman algorithm \cite{PohHell/78}, the discrete logarithm problem in $C$ can be solved in $O(\alpha\sqrt{p})$ group operations. Teske's algorithm \cite{Teske/99} for solving the DLP in abelian groups matches the Pohlig-Hellman algortihm in complexity in the case of cyclic groups. 
 An improvement on Teske's algorithm is given in \cite{Sutherland/11} and, though the computational costs for the case of cyclic groups seem not to be lower than the ones computed in \cite{PohHell/78}, discrete logarithms are computed in a faster way in practice with Sutherland's algorithm (see for example \cite[\S 6]{Sutherland/11}). 
 
 In general and to the best of our knowledge, it appears therefore that, if $C\cong\gen{[g_1,\ldots,g_n]}$ for $g_1,\ldots, g_n$ are as in Protocol I, the theoretical cost of breaking Protocol I is expected to be $O(\alpha\sqrt{p})$. Indeed, in generic groups, all multiplication costs in $C$ or $\Z/(p^{\alpha})^*$ are negligible compared to solving the DLP in $C$.

\subsection{Multilinear maps from profinite groups}\label{sec:multi-pro}
We have seen, in Section \ref{sec:reduction}, that there is a family of $p$-groups, namely the groups of maximal class, providing multilinear maps of any rank and thus allowing a Diffie-Hellman like key exchange between any number of users. We have, moreover, seen that, in a generic group of maximal class, the theoretical complexity of breaking Protocol I is expected to be $O(\sqrt{p})$.
We now observe that, if $G$ is a finite $p$-group of class $n$ and $1\leq m\leq n$ is an integer, then $\overline{G}=G/\gamma_m(G)$ is a finite $p$-group of class $m-1$ and thus $\overline{G}$ can be used for a key exchange protocol between $m$ users. In this sense, a finite $p$-group of class $n$ can be used in Protocol I for any number of users not exceeding $n+1$. It would be convenient if we had, at our disposal, an infinite pro-$p$ group with finite quotients of any nilpotency class. Even better it would be if we could realize such groups in such a way that quotients of consecutive elements of the lower central series can achieve arbitrary exponent (yielding increasing levels of security). Fortunately, such groups exist and, in the next and last section of this paper, we will look at some concrete examples. We remark that, in such examples, given the intrinsic linearity of the groups in question, Protocol I turns out to be much less secure than the expected theoretical complexity discussed in Section \ref{sec:algorithms}.




\section{Non-generic examples}\label{sec:examples}

In this section we discuss two concrete families of groups giving rise to key exchange protocols for $3$ and any number of users, respectively. In the context of (M1)--(M4) from Section \ref{sec:multilinearity}, we will see how, within these families, the linear nature of the groups contributes to the efficiency (M2), but penalizes security (M3).

\subsection{Extraspecial groups}

As mentioned in Section \ref{sec:ProtocolI}, Protocol I is a generalization of the key exchange presented in \cite{MahShinde/17} to any nilpotency class and, thus, to any number of users. In the last paper, the authors discuss possibilities for ``good platform groups'' for their cryptosystem. The outcome of the investigation does not yield cryptographic bilinear maps in the sense of \cite{BonehSilverberg/03} mainly because of the incompatibily of (M2) and (M3) in the considered examples. In \cite{MahShinde/17}, a special emphasis is put on extraspecial groups of exponent $p^2$; in the next example we look at extraspecial groups of exponent $p$ (the only other possibility for the exponent of an extraspecial group). For more detail on extraspecial groups, we refer to \cite[Ch.\ III.13]{Huppert/67}.   

\begin{example}
Let $p$ be an odd prime number and let $m$ be a positive integer. For elements ${\bf u,v}\in\F_p^m$ we write ${\bf u}\cdot{\bf v}$ for their (scalar) product, i.e.\ ${\bf u}\cdot{\bf v}={\bf u}{\bf v}^{\mathrm{T}}$. Observe that the last product defines a bilinear map $\F_p^m\times\F_p^m\rightarrow \F_p$. Let $G=\Heis_{2m+1}(\F_p)$ be the group with ground set $\F_p^m\times \F_p^m\times \F_p$ and multiplication defined by 
\[
({\bf u},{\bf v},z)({\bf u}', {\bf v}',z')=({\bf u}+{\bf u}',{\bf v}+{\bf v}',z+z'+{\bf u}\cdot{\bf v}').
\] 
The group $G$ has order $p^{2m+1}$, class $2$ with $\gamma_2(G)\cong \F_p$, and exponent $p$.
Moreover, $G$ can also be seen as a group of upper unitriangular matrices in the following sense:  
\[
G=\graffe{ 
M({\bf u},{\bf u},z)=\begin{pmatrix}
1 & {\bf u} & z \\
{\bf 0}^{\mathrm{T}} & \mathrm{Id}_m & {\bf v}^{\mathrm{T}} \\
0 & {\bf 0} & 1
\end{pmatrix} \mid {\bf u},{\bf u}\in\F_p^m,z\in\F_p
}
\]
and the multiplication is the naturally expected one.
If $a$ is an integer, one can easily compute 
\[
M({\bf u},{\bf v},z)^a
=M\Big(a {\bf u},a {\bf v},\binom{a+1}{2}z\Big)\]
and so the shared key from Protocol I can here be computed by any eavesdropper with a cost of at most $O(\log( p)^2)$ group operations, which is quite less secure than $O(\sqrt{p})$. This shows that, in this family of groups, similarly to the case of their siblings of exponent $p^2$, gaining in computational efficiency with a linear representation results in a loss of security.  
\end{example}

\subsection{Profinite examples}
Until the end of this section, let $p>3$ be a prime number and let $\alpha$ be a positive integer. Let, moreover, $\Z_p$ denote the ring of $p$-adic integers and let $t\in\Z_p$ be a quadratic nonresidue modulo $p$. Define $\Delta_p$ to be the quaternion algebra 
$$\Delta_p=\Z_p\oplus\Z_p\mathrm{i}\oplus\Z_p\mathrm{j}\oplus\Z_p\mathrm{k}
\textup{ where }
\mathrm{i}^2=t, \, \mathrm{j}^2=p,\ \text{and} \ \mathrm{k}=\mathrm{ij}=-\mathrm{ji}.$$
The quaternion algebra $\Delta_p$ is equipped with a bar map, defined by 
$$x=a+b\mathrm{i}+c\mathrm{j}+d\mathrm{k} \longmapsto 
\overline{x}=a-b\mathrm{i}-c\mathrm{j}-d\mathrm{k},$$ 
which is an anti-homomorphism of order $2$ and has $\mathfrak{m}=\Delta_p\,\mathrm{j}$ as its unique maximal ideal. It follows that an element $x=a+b\mathrm{i}+c\mathrm{j}+d\mathrm{k}$ belongs to $\mathfrak{m}$ if and only if both $a$ and $b$ belong to $p\Z_p$. Moreover, for each $k\in\Z_{\geq 1}$, the ideal $\mathfrak{m}^k$ is principal generated by $\mathrm{j}^k$ and therefore, for each $s\in\Z_{\geq 0}$, one has 
\[\mathfrak{m}^{2s}=p^s\Delta_p \ \ \text{and} \ \ \mathfrak{m}^{2s+1}=p^s\mathfrak{m}. \]
As a result, for each $k\in\Z_{\geq 1}$, the quotient $\mathfrak{m}^k/\mathfrak{m}^{k+1}$ is a vector space over $\F_p$ of dimension $2$. Now, for each $k\in\Z_{\geq 1}$, the set $1+\mathfrak{m}^k$ is easily seen to be a subgroup of $\Delta_p^*$ and the natural map 
$$(1+\mathfrak{m}^k)/(1+\mathfrak{m}^{k+1})\rightarrow \mathfrak{m}^k/\mathfrak{m}^{k+1}$$  is an isomorphism of groups. It follows that $1+\mathfrak{m}$ is a pro-$p$ subgroup of $\Delta_p^*$.
Define $$\Sn(\Delta_p)=(1+\mathfrak{m})\cap\graffe{x\in\Delta_p : \overline{x}=x^{-1}}.$$ 
Then $\Sn(\Delta_p)$ is a closed subgroup of $1+\mathfrak{m}$ and thus a pro-$p$ group itself. We have here lightened the notation from \cite{GSK/09}, where the group $\Sn(\Delta_p)$ is denoted by $\SL_1^1(\Delta_p)$. With the notation of this paper, the following structural results can be found for example in \cite{Sta/17}.

\begin{lemma}\label{lemma:LCS}
Let $k$ and $\ell$ be positive integers. Then the following hold:
\begin{enumerate}[label=$(\arabic*)$]
    \item one has $\gamma_k(\Sn(\Delta_p))=(1+\mathfrak{m}^k)\cap S(\Delta_p)$ and $|\gamma_k(\Sn(\Delta_p)):\gamma_{k+1}(\Sn(\Delta_p))|\in\graffe{p,p^2}$.
    \item the map ${x\mapsto x^p}$ on $\Sn(\Delta_p)$ induces an isomorphism 
$$\rho_k:\gamma_k(\Sn(\Delta_p))/\gamma_{k+1}(\Sn(\Delta_p))\rightarrow \gamma_{k+2}(\Sn(\Delta_p))/\gamma_{k+3}(\Sn(\Delta_p)).$$
 \item if $k$ is odd, then $\cl([\gamma_k(\Sn(\Delta_p),\gamma_{\ell}(\Sn(\Delta_p)))])=\gamma_{k+\ell}(\Sn(\Delta_p))$.
\end{enumerate}
\end{lemma}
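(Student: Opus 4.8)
The plan is to work entirely inside the filtration of $1+\mathfrak{m}$ by the subgroups $1+\mathfrak{m}^k$, transporting the problem to the associated graded ring $\bigoplus_k \mathfrak{m}^k/\mathfrak{m}^{k+1}$ where commutators linearize. First I would record the basic dictionary: since $\mathfrak{m}$ is a two-sided ideal with $\mathfrak{m}^{2s}=p^s\Delta_p$ and $\mathfrak{m}^{2s+1}=p^s\mathfrak{m}$, each quotient $\mathfrak{m}^k/\mathfrak{m}^{k+1}$ is a $2$-dimensional $\F_p$-vector space, and for $x=1+u$, $y=1+v$ with $u\in\mathfrak{m}^k$, $v\in\mathfrak{m}^\ell$ one has $[x,y]\equiv 1+(uv-vu)\pmod{\mathfrak{m}^{k+\ell+1}}$ and $x^p\equiv 1+(pu + \binom p2 u^2+\cdots)\equiv 1+pu \pmod{\mathfrak{m}^{?}}$ after checking that the cross terms and $u^p$ land deep enough (here $p>3$ is used so that $\binom p2 u^2\in\mathfrak m^{k+1}$ already contributes to the correct graded piece only via $pu$, which sits in $\mathfrak m^{k+2}$). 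The bar map acts on $\mathfrak{m}^k/\mathfrak{m}^{k+1}$ as $-1$ on the $\mathrm{i}$ and $\mathrm{k}$ coordinates and suitably on $\mathrm{j}$ powers; the condition $\overline x = x^{-1}$ becomes, on the graded piece, $\overline{u}=-u$, cutting out a $1$-dimensional subspace of each $\mathfrak{m}^k/\mathfrak{m}^{k+1}$ when $k$ is such that the trace-type component survives, and possibly the full $2$-dimensional space otherwise. This is the computation that produces the dichotomy $|\gamma_k:\gamma_{k+1}|\in\{p,p^2\}$ in part (1).

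For part (1), I would first show $\gamma_k(\Sn(\Delta_p))\subseteq (1+\mathfrak{m}^k)\cap \Sn(\Delta_p)$ by induction, using that $[\,\cdot\,,\mathfrak m^\ell]$ raises the filtration degree. For the reverse inclusion I would argue that the graded Lie algebra generated by the degree-$1$ piece $V_1$ of $\Sn(\Delta_p)$ (i.e.\ $\{u\in\mathfrak{m}/\mathfrak{m}^2:\overline u=-u\}$) fills out the correct graded pieces; concretely one identifies $V_1$ with the span of $\mathrm{i}$ and $\mathrm{j}$ modulo $\mathfrak{m}^2$, computes the bracket $[\mathrm i,\mathrm j]=2\mathrm k$, and iterates — using $p>3$ so that $2$ is invertible — to see that the Lie subalgebra they generate is exactly $\bigoplus_{k\ge 1}((\mathfrak m^k/\mathfrak m^{k+1})\cap\{\overline u=-u\})$. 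Then one lifts back: the Frattini-type argument (the degree-$1$ generators of $\Sn(\Delta_p)$ topologically generate it) promotes the graded statement to the group statement $\gamma_k(\Sn(\Delta_p))=(1+\mathfrak m^k)\cap\Sn(\Delta_p)$.

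Part (2) should then be essentially immediate from the dictionary: on $\mathfrak{m}^k/\mathfrak{m}^{k+1}$, $x\mapsto x^p$ induces $u\mapsto pu$, and multiplication by $p$ is exactly the map $\mathfrak m^k/\mathfrak m^{k+1}\xrightarrow{\sim}\mathfrak m^{k+2}/\mathfrak m^{k+3}$ coming from $\mathfrak m^{2s+2}=p\,\mathfrak m^{2s}$ and $\mathfrak m^{2s+3}=p\,\mathfrak m^{2s+1}$; this isomorphism restricts to the $\overline u=-u$ subspaces, giving $\rho_k$. For part (3), with $k$ odd I would use part (1) to replace $\gamma_k,\gamma_\ell$ by $(1+\mathfrak m^k)\cap\Sn(\Delta_p)$ etc., then show on the graded level that $[V_k,V_\ell]$ spans $V_{k+\ell}$; the parity hypothesis is what guarantees $V_k$ contains an element whose bracket with a generator of $V_\ell$ is nonzero (when $k$ is odd the surviving graded piece is the ``$\mathrm j^{k}\cdot(\text{span of }\mathrm i,\mathrm k\text{-type})$'' part, which brackets nontrivially), and again $p>3$ keeps the relevant structure constants invertible; one then lifts back via the standard commutator-closure argument that $\cl([1+\mathfrak m^k,1+\mathfrak m^\ell])=1+\mathfrak m^{k+\ell}$ when the graded brackets are surjective.

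The main obstacle I expect is the careful bookkeeping in the dichotomy of part (1): one must pin down, for each residue of $k$ modulo something (essentially modulo $2$, since $\mathfrak m^2=p\Delta_p$), exactly which $1$- or $2$-dimensional subspace of $\mathfrak m^k/\mathfrak m^{k+1}$ is cut out by $\overline u=-u$ and simultaneously realized as a Lie-algebra bracket of lower pieces — and to verify that the ``deep enough'' estimates on $x^p$ and on higher commutator terms are valid with the stated $p>3$ rather than merely $p$ odd. Since, as the text notes, these structural facts are available in \cite{Sta/17} (and \cite{GSK/09}), I would present the graded-ring reduction in detail and then cite those references for the explicit verification of the surviving subspaces, rather than reproducing the full case analysis here.
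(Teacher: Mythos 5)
Your plan is essentially correct in outline, but it is a genuinely different route from the paper's: the proof given in the paper for Lemma~\ref{lemma:LCS} is purely a pointer to the literature --- part (1) is quoted from \cite{Sta/17} (Lem.~438), with the index statement read off from $|\mathfrak{m}^k:\mathfrak{m}^{k+1}|=p^2$, part (2) is \cite{Sta/17} (Lem.~439), and part (3) is obtained by combining Lemmas~441 and 328(2) there --- whereas you reconstruct the argument that underlies those citations: pass to the graded object $\bigoplus_k\mathfrak{m}^k/\mathfrak{m}^{k+1}$, where $[1+u,1+v]\equiv 1+(uv-vu)$ and $(1+u)^p\equiv 1+pu$ to the relevant depth (and you correctly locate where $p>3$ is needed, e.g.\ the term $u^p$ for $k=1$), identify the graded pieces of $\Sn(\Delta_p)$ as the $(-1)$-eigenspaces of the bar map, and lift graded statements back to the group by the standard Frattini/compactness argument. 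What your route buys is a self-contained proof that makes visible exactly where ``$p>3$'' and ``$k$ odd'' enter; what the paper's route buys is brevity, since all the filtration bookkeeping is already carried out in \cite{GSK/09,Sta/17}.

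One concrete detail of your sketch must be fixed before it can be carried out: the degree-one piece is not spanned by $\mathrm{i}$ and $\mathrm{j}$. Since $\mathfrak{m}=\Delta_p\mathrm{j}=p\Z_p\oplus p\Z_p\mathrm{i}\oplus\Z_p\mathrm{j}\oplus\Z_p\mathrm{k}$ and $\mathrm{i}$ is a unit of $\Delta_p$ (so does not lie in $\mathfrak{m}$), the quotient $\mathfrak{m}/\mathfrak{m}^2$ is spanned by the images of $\mathrm{j}$ and $\mathrm{k}$, both of which are $(-1)$-eigenvectors of the bar map; more generally the odd-degree eigenspaces $V_{2s+1}$ are two-dimensional, spanned by $p^s\mathrm{j},p^s\mathrm{k}$, while the even-degree ones $V_{2s}$ are one-dimensional, spanned by $p^s\mathrm{i}$ (the condition $\overline{u}=-u$ kills the $p^s\Z_p$-component). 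The brackets you should use are $[\mathrm{j},\mathrm{k}]=-2p\,\mathrm{i}$, $[\mathrm{j},\mathrm{i}]=-2\mathrm{k}$, $[\mathrm{k},\mathrm{i}]=-2t\mathrm{j}$; since $2$ and $t$ are units, these give exactly the graded surjectivity needed in your arguments for (1) and (3), and they also explain the hypothesis ``$k$ odd'' in (3): two even-degree pieces are both spanned by powers of $p$ times $\mathrm{i}$ and hence commute at the graded level. With this correction (and the corresponding adjustment of your parity bookkeeping), your plan goes through and agrees with the computations in the cited references.
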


\begin{proof}
(1) The first claim is \cite[Lem.\ 438]{Sta/17} while the second follows from the fact that $|\mathfrak{m}^k:\mathfrak{m}^{k+1}|=p^2$. (2) This is \cite[Lem.\ 439]{Sta/17}. (3) This follows from combining Lemmas 441 and 328(2) from \cite{Sta/17}.
\end{proof}

\begin{proposition}\label{prop:exp}
Denote $G=\Sn(\Delta_p)$.
Let $i=2\alpha -1$ and set $H=\gamma_i(G)$. Then, for each positive integer $k$, the subgroup $\gamma_k(H)$ is open of finite index in $G$ and the exponent of $\gamma_{k}(H)/\gamma_{k+1}(H)$ is $p^{\alpha}.$
\end{proposition}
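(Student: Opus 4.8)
The plan is to understand $\gamma_k(H)$ where $H = \gamma_i(G)$ with $i = 2\alpha - 1$ by relating it back to the lower central series of $G$ itself. First I would establish that $\gamma_k(H) = \gamma_k(\gamma_i(G))$ sits inside $\gamma_{ki}(G)$, but more precisely I want an exact identification. The key input is Lemma \ref{lemma:LCS}(3): since $i = 2\alpha-1$ is odd, we have $\cl([\gamma_i(G),\gamma_\ell(G)]) = \gamma_{i+\ell}(G)$ for all $\ell$. Iterating this, I expect to show by induction on $k$ that $\gamma_k(H) = \gamma_{ki}(G) = \gamma_{k(2\alpha-1)}(G)$; the base case $k=1$ is the definition of $H$, and the inductive step computes $\gamma_{k+1}(H) = \cl([H,\gamma_k(H)]) = \cl([\gamma_i(G),\gamma_{ki}(G)])$, which equals $\gamma_{(k+1)i}(G)$ by Lemma \ref{lemma:LCS}(3) applied with the odd index $i$. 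Once this identification is in place, openness and finite index of $\gamma_k(H)$ in $G$ follow immediately: by Lemma \ref{lemma:LCS}(1) every $\gamma_m(G)$ has finite index in $G$ (the successive quotients are finite of order $p$ or $p^2$), and finite-index closed subgroups of a pro-$p$ group are open.

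Next I would compute the exponent of $\gamma_k(H)/\gamma_{k+1}(H)$. Under the identification above this is the exponent of $\gamma_{k(2\alpha-1)}(G)/\gamma_{(k+1)(2\alpha-1)}(G)$, a quotient spanning $2\alpha - 1$ consecutive steps of the lower central series of $G$. The tool here is Lemma \ref{lemma:LCS}(2): the $p$-power map induces an isomorphism $\rho_m : \gamma_m(G)/\gamma_{m+1}(G) \to \gamma_{m+2}(G)/\gamma_{m+3}(G)$ for every $m$. This says that raising to the $p$-th power shifts two steps down the series, so raising to the $p^{\alpha-1}$-th power shifts $2(\alpha-1) = 2\alpha - 2$ steps down. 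Starting from the bottom layer $\gamma_{k(2\alpha-1)}(G)/\gamma_{k(2\alpha-1)+1}(G)$ (which has exponent exactly $p$, being an elementary abelian quotient), an element of $\gamma_{k(2\alpha-1)}(G)$ whose class there is nontrivial has $p^{\alpha-1}$-th power lying in $\gamma_{k(2\alpha-1)+2\alpha-2}(G) = \gamma_{(k+1)(2\alpha-1)-1}(G)$ with nontrivial image in the layer $\gamma_{(k+1)(2\alpha-1)-1}(G)/\gamma_{(k+1)(2\alpha-1)}(G)$ — in particular its $p^{\alpha-1}$-th power is not yet in $\gamma_{(k+1)(2\alpha-1)}(G)$, so its order in $\gamma_k(H)/\gamma_{k+1}(H)$ is at least $p^\alpha$. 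Conversely, its $p^\alpha$-th power lands in $\gamma_{(k+1)(2\alpha-1)+1}(G) \subseteq \gamma_{(k+1)(2\alpha-1)}(G) = \gamma_{k+1}(H)$, so every generator-type element is killed by $p^\alpha$; combined with the fact that $\gamma_k(H)/\gamma_{k+1}(H)$ is generated by images of such elements and that $\rho$-iteration controls all the intermediate layers, the exponent is exactly $p^\alpha$.

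The main obstacle I anticipate is the bookkeeping in the exponent computation: I need to be careful that the $p$-power map on the whole group, not just on a single graded layer, behaves compatibly — i.e.\ that $x^{p^{\alpha-1}} \bmod \gamma_{(k+1)(2\alpha-1)}(G)$ depends only on $x \bmod \gamma_{k(2\alpha-1)+1}(G)$ and is computed by iterating $\rho$. This requires knowing that the filtration $(\gamma_m(G))_m$ is a so-called $p$-filtration (or at least that $\gamma_m(G)^p \subseteq \gamma_{m+2}(G)$ with the induced maps being exactly the $\rho_m$), which is precisely the content of Lemma \ref{lemma:LCS}(2) read across all indices; one should check that no cross-terms from the non-abelianness of $G$ interfere, using that the commutators $[\gamma_a(G),\gamma_b(G)]$ land deep enough in the series (Lemma \ref{lemma:LCS}(3)) to be absorbed. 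A clean way to package this is to pass to the associated graded Lie ring / restricted Lie algebra of $G$ with respect to the $(\gamma_m)$-filtration, where $\rho$ becomes the $p$-operation and the statement becomes a transparent degree-shift computation; whether to do this explicitly or argue directly on consecutive quotients is a presentational choice, but the direct argument should suffice given the strength of Lemma \ref{lemma:LCS}.
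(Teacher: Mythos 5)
Your proposal is correct and follows essentially the same route as the paper: identify $\gamma_k(H)=\gamma_{ki}(G)$ via Lemma \ref{lemma:LCS}(3) (the paper states this directly, you add the induction), deduce openness and finite index from Lemma \ref{lemma:LCS}(1), and compute the exponent as $p^{\alpha}$ from the two-step shift of the $p$-power maps in Lemma \ref{lemma:LCS}(2). Your iteration of the maps $\rho_m$ is exactly the argument the paper compresses into the formula $\epsilon=((ki+2\alpha-1)+1-ki)/2=\alpha$, and the compatibility worry you raise is already settled by the fact that each $\rho_m$ is induced by $x\mapsto x^p$ itself.
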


\begin{proof}
Let $k$ be a positive integer.
As a consequence of Lemma \ref{lemma:LCS}(3), one has
\[
\gamma_k(H)=\gamma_{ki}(G) \textup{ and }
\gamma_{k+1}(H)=\gamma_{(k+1)i}(G)=\gamma_{ki+2\alpha-1}(G).
\]
Now, it follows from Lemma \ref{lemma:LCS}(2) that the exponent of $\gamma_{k}(H)/\gamma_{k+1}(H)$ is equal to $p^\epsilon$ where
\[
\epsilon = \frac{(ki+2\alpha -1 )+1-ki}{2}=\alpha.
\]
The fact that $\gamma_k(H)$ is open of finite index in $G$ follows from Lemma \ref{lemma:LCS}(1). 
\end{proof}

\begin{proposition}\label{prop:powers}
Denote $G=\Sn(\Delta_p)$ and let $k,m$ be positive integers.
Let $i=2\alpha -1$ and set $H=\gamma_i(G)$.
Let $x=a+b\mathrm{i}+c\mathrm{j}+d\mathrm{k}$ be an element of $\gamma_k(H)$. 
Then one has 
\begin{equation}\label{eq:power}
x^m\equiv a^m+m b\mathrm{i}+m c\mathrm{j}+m d\mathrm{k} \bmod \mathfrak{m}^{(k+1)i}. 
\end{equation}
\end{proposition}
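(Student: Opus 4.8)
The statement to prove is Proposition~\ref{prop:powers}: for $x = a + b\mathrm{i} + c\mathrm{j} + d\mathrm{k} \in \gamma_k(H)$ with $H = \gamma_{2\alpha-1}(G)$, one has $x^m \equiv a^m + mb\mathrm{i} + mc\mathrm{j} + md\mathrm{k} \bmod \mathfrak{m}^{(k+1)i}$.

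The plan is to first pin down where the ``non-commutative'' part of the element lives. By Proposition~\ref{prop:exp} (or directly Lemma~\ref{lemma:LCS}(3)), we have $\gamma_k(H) = \gamma_{ki}(G) = (1+\mathfrak{m}^{ki}) \cap S(\Delta_p)$ and $\gamma_{k+1}(H) = \gamma_{(k+1)i}(G) = (1+\mathfrak{m}^{(k+1)i}) \cap S(\Delta_p)$, using Lemma~\ref{lemma:LCS}(1). Writing $x = 1 + \xi$, the element $x \in \gamma_k(H)$ means $\xi \in \mathfrak{m}^{ki}$; I want to understand $x^m = (1+\xi)^m = \sum_{j=0}^m \binom{m}{j}\xi^j$ modulo $\mathfrak{m}^{(k+1)i} = \mathfrak{m}^{ki}\mathfrak{m}^i$. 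The key observation is that since $\xi \in \mathfrak{m}^{ki}$, all the terms $\binom{m}{j}\xi^j$ with $j \geq 2$ lie in $\mathfrak{m}^{2ki}$, which is contained in $\mathfrak{m}^{(k+1)i}$ precisely when $2ki \geq (k+1)i$, i.e. $k \geq 1$ — which always holds. So $x^m \equiv 1 + m\xi \bmod \mathfrak{m}^{(k+1)i}$, and hence $x^m \equiv (1-m) + mx \bmod \mathfrak{m}^{(k+1)i}$ as elements of $\Delta_p$.

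Now I have to reconcile $(1-m) + mx = (1 - m + ma) + mb\mathrm{i} + mc\mathrm{j} + md\mathrm{k}$ with the claimed right-hand side $a^m + mb\mathrm{i} + mc\mathrm{j} + md\mathrm{k}$. The $\mathrm{i},\mathrm{j},\mathrm{k}$ components already agree, so it remains to show $a^m \equiv 1 - m + ma \bmod \mathfrak{m}^{(k+1)i} \cap \Z_p = p^{\lceil (k+1)i/2\rceil}\Z_p$. But $x \in \gamma_k(H) \subseteq 1 + \mathfrak{m}^{ki}$ forces $a \equiv 1 \bmod p^{\lceil ki/2\rceil}$; writing $a = 1 + \pi$ with $\pi \in p^{\lceil ki/2 \rceil}\Z_p$, we get $a^m = (1+\pi)^m \equiv 1 + m\pi \bmod \pi^2\Z_p$, and $\pi^2 \in p^{2\lceil ki/2\rceil}\Z_p \subseteq p^{\lceil(k+1)i/2\rceil}\Z_p$ by the same inequality $2k \geq k+1$. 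Thus $a^m \equiv 1 + m\pi = 1 - m + ma$ to the required precision, closing the argument.

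The main obstacle — really the only place that requires care rather than bookkeeping — is making the $p$-adic valuations line up consistently: one must check that $\mathfrak{m}^{2ki} \subseteq \mathfrak{m}^{(k+1)i}$ (trivial from $k\geq 1$) and the analogous statement for the scalar component, keeping track of the identifications $\mathfrak{m}^{2s} = p^s\Delta_p$, $\mathfrak{m}^{2s+1} = p^s\mathfrak{m}$ from the setup, and that the ``$a \equiv 1$'' congruence has the right exponent. One small subtlety worth stating explicitly is that $\xi = x - 1$ need not itself be a ``pure'' quaternion, but its scalar part is exactly $a - 1$ and its $\mathrm{i},\mathrm{j},\mathrm{k}$ parts are $b, c, d$, so the componentwise identification in the last paragraph is legitimate. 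No appeal to the group structure of $S(\Delta_p)$ beyond the identification of the lower central series with $(1+\mathfrak{m}^{ki})\cap S(\Delta_p)$ is needed; the computation happens in the associative ring $\Delta_p$.
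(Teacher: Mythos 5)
Your proof is correct, and it takes a genuinely different route from the paper's. The paper proves \eqref{eq:power} by induction on $m$: it writes $y=a^{m-1}+(m-1)b\mathrm{i}+(m-1)c\mathrm{j}+(m-1)d\mathrm{k}$, multiplies out $xy$ coordinatewise, and justifies the needed congruences $b^2\equiv 0$, $pc^2\equiv 0$, $pd^2\equiv 0$, $ab\equiv b$, $ac\equiv c$, $ad\equiv d \bmod \mathfrak{m}^{(k+1)i}$ by a three-way case analysis on the parities of $ki$ and $(k+1)i$. You instead work directly in the ring $\Delta_p$: writing $x=1+\xi$ with $\xi\in\mathfrak{m}^{ki}$ (legitimate, since $\gamma_k(H)=\gamma_{ki}(G)=(1+\mathfrak{m}^{ki})\cap\Sn(\Delta_p)$ by Lemma~\ref{lemma:LCS}(1),(3) with $i=2\alpha-1$ odd), the binomial theorem applies because $1$ is central, and all terms $\binom{m}{j}\xi^j$ with $j\geq 2$ lie in $\mathfrak{m}^{2ki}\subseteq\mathfrak{m}^{(k+1)i}$ since $2ki\geq(k+1)i$ for $k\geq 1$; the only residual discrepancy is in the scalar coordinate, where $1+m(a-1)$ must be compared with $a^m$, and the same one-line argument in $\Z_p$ (with $a-1\in p^{\lceil ki/2\rceil}\Z_p$ and $2\lceil ki/2\rceil\geq\lceil (k+1)i/2\rceil$, using $\mathfrak{m}^{2s}=p^s\Delta_p$, $\mathfrak{m}^{2s+1}=p^s\mathfrak{m}$ and $\mathfrak{m}\cap\Z_p=p\Z_p$) closes it. Your argument replaces the paper's induction and parity bookkeeping with a single ``higher-order terms vanish'' observation, which is shorter and makes transparent why the statement holds (the error is quadratic in $\xi$, and quadratic already beats the target precision); the paper's coordinatewise computation, on the other hand, records explicitly which products of coefficients die modulo $\mathfrak{m}^{(k+1)i}$, which is closer in spirit to the explicit low-cost computations the eavesdropper exploits in the subsequent discussion. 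Both proofs rest on the same structural inputs (the identification of $\gamma_k(H)$ with $(1+\mathfrak{m}^{ki})\cap\Sn(\Delta_p)$ and the valuation description of the powers of $\mathfrak{m}$), so no new hypotheses are smuggled in.
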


\begin{proof}
Thanks to the combination of Lemma \ref{lemma:3sbgs}(1) and (3), we have $\gamma_k(H)=\gamma_{ki}(G)=(1+\mathfrak{m}^{ki})\cap G$ and so we have control on the $p$-adic valuations of the coefficients of $x$. More in detail, three cases can occur: 
\begin{enumerate}[label=$(\arabic*)$]
    \item $ki$ is odd, in which case $(k+1)i$ is even. In this case we have
    \begin{align*}
        (a,b,c,d)\equiv (1,0,0,0) & \bmod (p^{(ki+1)/2}, p^{(ki+1)/2},p^{(ki-1)/2},p^{(ki-1)/2}) \\
        \textup{work} & \bmod (p^{(ki+i)/2}, p^{(ki+i)/2},p^{(ki+i)/2},p^{(ki+i)/2}),
    \end{align*}
        \item $ki$ and $(k+1)i$ are both even. In this case we have
    \begin{align*}
        (a,b,c,d)\equiv (1,0,0,0) & \bmod (p^{ki/2}, p^{ki/2},p^{ki/2},p^{ki/2}) \\
        \textup{work} & \bmod (p^{(ki+i)/2}, p^{(ki+i)/2},p^{(ki+i)/2},p^{(ki+i)/2}),
    \end{align*}
    \item $ki$ is even and $(k+1)i$ is odd, implying that $k$ is even and $i$ odd. In this case we have
    \begin{align*}
        (a,b,c,d)\equiv (1,0,0,0) & \bmod (p^{ki/2}, p^{ki/2},p^{ki/2},p^{ki/2}) \\
        \textup{work} & \bmod (p^{(ki+i+1)/2}, p^{(ki+i+1)/2},p^{(ki+i-1)/2},p^{(ki+i-1)/2}).
    \end{align*}
\end{enumerate}
The equivalences we will write in this proof all follow from the analysis of the different possibilities (1)-(2)-(3), so we will not explicitly write all computations. The equivalences we will use are: 
\[
b^2\equiv 0,\ pc^2\equiv 0,\ pd^2\equiv 0,\ ab\equiv b,\ ac\equiv c,\ ad \equiv d \bmod \mathfrak{m}^{(k+1)i}.
\]
We work by induction on $m$. If $m=1$, the statement is clearly true, so we assume that $m>1$ and that \eqref{eq:power} holds for $m-1$. Defining 
\[y=a^{m-1}+(m-1)b\mathrm{i}+(m-1)c\mathrm{j}+(m-1)d\mathrm{k},\]
we have that $x^{m-1}\equiv y\bmod \mathfrak{m}^{(k+1)i}$. 
Write $xy=A+B\mathrm{i}+C\mathrm{j}+D\mathrm{k}$. Modulo $\mathfrak{m}^{(k+1)i}$, we then compute
\begin{align*}
    A & = a^{m}+t(m-1)b^2+ p(m-1)c^2 -pt(m-1)d^2 \equiv a^m,  \\
    B & = (m-1)ab+a^{m-1}b \equiv mb, \\
    C & = (m-1)ac+a^{m-1}c \equiv mc, \\
    D & = (m-1)ad+a^{m-1}d \equiv md.
\end{align*}
We conclude by observing that
$x^m = xx^{m-1} \equiv xy \equiv A+B\mathrm{i}+C\mathrm{j}+D\mathrm{k}\bmod \mathfrak{m}^{(k+1)i}$.
\end{proof}

\noindent
We close the article with a discussion of the implications of Proposition \ref{prop:powers} on Protocol I. For this, set $i=2\alpha-1$ and $H=\gamma_i(\Sn(\Delta_p))$. Choose a number of users $n+1$. Then Proposition \ref{prop:exp} ensures that $\overline{H}=H/\gamma_{n+1}(H)$ is a finite $p$-group of class $n$ satisfying $\exp(\gamma_n(\overline{H}))=p^{\alpha}$. In other words $\overline{H}$ is a platform group for Protocol I on $n+1$ users. Note now that, thanks to Proposition \ref{prop:exp} and Lemma \ref{lemma:LCS}(1), we have 
\[
\gamma_n(H)=(1+\mathfrak{m}^{ni})\cap\Sn(\Delta_p) \textup{ and }
\gamma_{n+1}(H)=(1+\mathfrak{m}^{(n+1)i})\cap\Sn(\Delta_p),
\]
so the powers of an element $x$ in $\gamma_n(\overline{H})$ are described exactly by the formula from Proposition \ref{prop:powers}. In particular, the cost an eavesdropper has to pay to recover the shared key is at most  $O(\alpha^2(\log p)^2)$ group operations, against the generic $O(\alpha\sqrt{p})$ group operations predicted by the Pohlig-Hellman algorithm.\\

We remark that the examples of profinite groups we considered in this section are a particular instance of so-called $p$-adic analytic groups. Such groups come with the structure of an analytic manifold which seems to favour (M2) but penalize (M3). Our investigation suggests that one should, for security reasons, consider non-analytic pro-$p$ groups in the context of Protocol~I. 




\bibliographystyle{abbrv}
\bibliography{references} 
 
\vspace*{2em}\noindent{\footnotesize\begin{minipage}[t]{0.5\textwidth} Delaram Kahrobaei \\
University of York \\
Department of Computer Science\\ 
Deramore Lane\\ 
YO10 5GH York\\
United Kingdom\\ \quad\\ E-mail: \href{mailto:delaram.kahrobaei@york.ac.uk}{delaram.kahrobaei@york.ac.uk}\end{minipage}
\hfill
\begin{minipage}[t]{0.5\textwidth} Mima Stanojkovski\\ Max-Planck-Institut f\"ur Mathematik in \\ den Naturwissenschaften\\ Inselstrasse 22\\ 04103 Leipzig\\ Germany\\ \quad\\ E-mail: \href{mailto:mima.stanojkovski@mis.mpg.de}{mima.stanojkovski@mis.mpg.de}\end{minipage}}

 \end{document}